\documentclass[journal,twoside,web]{ieeecolor}
\usepackage{lcsys}
\usepackage{cite}
\usepackage{amsmath,amssymb,amsfonts}
\usepackage{algorithmic}
\usepackage{graphicx}
\usepackage{textcomp}

\DeclareMathSymbol{:}{\mathord}{operators}{"3A}
\DeclareMathSymbol{\shortminus}{\mathbin}{AMSa}{"39}

\newcommand{\tb}{}

\newcommand{\real}{\mathbb{R}}

\newcommand{\natrl}{\mathbb{N}}
\newcommand{\indx}[1]{\mathcal{I}_{#1}}

\newcommand{\union}{\operatorname{\cup}}
\newcommand{\intersection}{\ensuremath{\operatorname{\cap}}}

\newcommand{\eig}[1]{\mathrm{eig}({#1})}
\newcommand{\blkdiag}[2]{\mathrm{blkdiag}({#1}, \ldots, {#2})}
\newcommand{\col}[2]{\mathrm{col}({#1}, \ldots, {#2})}

\newcommand{\schurcomp}[2]{{#1} / {#2}}
\newcommand{\transp}{{\sf T}}

\newcommand{\setdef}[2]{\{#1 \;|\; #2\}}

\newcommand{\define}{\triangleq}

\newtheorem{theorem}{Theorem}
\newtheorem{proposition}{Proposition}

\newtheorem{lemma}{Lemma}

\newtheorem{definition}{Definition}

\newtheorem{problem}{Problem}

\newenvironment{pfof}[1]{\vspace{1ex}\noindent{\itshape Proof of #1:}\hspace{0.5em}} {\hfill \hspace*{\fill}$\blacksquare$ \vspace{1ex}}

\def\BibTeX{{\rm B\kern-.05em{\sc i\kern-.025em b}\kern-.08em
    T\kern-.1667em\lower.7ex\hbox{E}\kern-.125emX}}
\begin{document}

\title{A Lyapunov Characterization of Robust D-Stability with Application to Decentralized Integral Control of LTI Systems}

\author{John-Paolo Casasanta, \IEEEmembership{Student Member, IEEE}, and John W. Simpson-Porco, \IEEEmembership{Senior Member, IEEE}
\thanks{Research supported by NSERC Discovery RGPIN-2024-05523 and the NSERC CGS-M Award}
\thanks{The authors are with the Department of Electrical and Computer Engineering, University of Toronto, 10 King’s College Road, Toronto, ON, M5S 3G4, Canada (email: jp.casasanta@mail.utoronto.ca; jwsimpson@ece.utoronto.ca)}
\thanks{}
\thanks{}}

\maketitle

\begin{abstract}
    The concept of matrix D-stability plays an important role in applications, ranging from economic and biological system models to decentralized control. Here we provide necessary and sufficient Lyapunov-type conditions for the robust (block) D-stability property. We leverage this characterization as part of a novel Lyapunov analysis of decentralized integral control for MIMO LTI systems, providing sufficient conditions guaranteeing stability under low-gain and under arbitrary connection and disconnection of individual control loops.
\end{abstract}

\begin{IEEEkeywords}
    Stability of linear systems, output regulation, decentralized control
\end{IEEEkeywords}

%------------------------------------
%------------------------------------
%------------------------------------

\section{Introduction}
\label{sec:introduction}

A square matrix \(A\) is \emph{\(D\)-stable} if \(DA\) is Hurwitz for every positive diagonal matrix \(D\). This property has been studied for many decades, dating at least to the work of Enthoven and Arrow in economics \cite{ACE-KJA:56}, and subsequently developed a substantial literature \cite{KJA-MM:58, CRJ:74, CRJ:75, BEC:76}. More generally, one may consider \emph{robust \(D\)-stability}, which requires that \(D\)-stability persist under arbitrarily small perturbations of \(A\); see \cite{DJH:80, BEC:84, EA:86, WSK:02}. Both notions admit natural block variants, in which \(A\) and the admissible scalings \(D\) are required to respect a prescribed block structure. In this paper, we work in this block setting. 

A major motivation for studying \(D\)-stability and its variants is their appearance in applications. These notions arise naturally in economics \cite{ACE-KJA:56, MUR:25}, in the analysis of biological systems \cite{FB-EF-GG:12}, process control \cite{PJC-MM:94, SWS-AVS-YG-BGC-HTN:15, SWS-ZZ-BC-AS:25}, 
% the control of networked systems \cite{SRG-DD:15}, 
and in multi-parameter singular perturbation theory \cite{HKK-PVK:79, EA:86}.

Despite a long history, useful characterizations of (robust, block) \(D\)-stability remain difficult to obtain. 
{\tb In particular, the most immediate Lyapunov characterization of robust block $D$-stability (see Proposition \ref{Prop:RobDStabLyap}) is both computationally intensive to verify, and cumbersome to use in Lyapunov analysis.}
{\tb Our particular motivating need for an improved Lyapunov characterization arises from the study of decentralized integral controllers. It is well known that integral controllers can be used to asymptotically track constant references and reject constant disturbances --- the key idea is to integrate the tracking error $e \in \real^p$ and subsequently design an additional stabilizing compensator to guarantee closed-loop stability \cite{BAF-WMW:76}. In some practical applications (e.g., automatic generation control of power systems), the system to be controlled is already stable or has been stabilized, and a simplified integral control design of the form
\[
\dot{\eta} = -\varepsilon e, \qquad u = K\eta
\]
can be appended as a secondary control loop to improve tracking performance, where the matrix $K$ is a control gain to be designed offline, and the parameter $\varepsilon > 0$ is an integral gain to be tuned online. Typically, $\varepsilon$ must be kept small, in order to maintain closed-loop stability. Such designs are therefore known as \emph{low-gain integral controllers} (see \cite{ED:76b, CD-CAL:85, JWSP:21, PL-GW:23, PL-MG-RP-DN:25}). Extending this to a decentralized control setting where $N$ interacting subsystems must be independently controlled, each sub-controller will be a separate low-gain integral controller with gain $\varepsilon_i$ (see \eqref{eq:ControllersExpanded} in Section \ref{sec:intControl}, and additionally see \cite{ED:76a, PJC-MM:94, SWS-JB-PLL:04}). In such a decentralized configuration, $D$-stability requirements occur naturally, as each local integral gain $\varepsilon_i > 0$ must be tuned independently.

A related application motivating this study is \emph{feedback-based optimization}, wherein integral control-based optimization algorithms are implemented online to drive a dynamic system towards an optimal point of operation (see, e.g., \cite{AH-ZH-SB-GH-DF:24}). Decentralized low-gain integral control techniques are similar to those used during the analysis of decentralized and game-theoretic feedback-based optimization architectures (see \cite{AA-JWSP-LP:23, WW-ZH-GB-SB-FD:24, GB-DLM-MHB-SB-RSS-JL-FD:25}), and this will be a topic of future development.

}

\smallskip

\textit{Contributions:} We provide a new Lyapunov-based characterization of robust block \(D\)-stability. The result states that robust block \(D\)-stability is equivalent to a uniform boundedness property for the solutions of an associated Lyapunov equation. We then use this result to analyze decentralized low-gain integral controllers for LTI systems. In particular, we provide a novel Lyapunov analysis establishing that if \(-GK\) is robustly block \(D\)-stable, where \(G\) denotes the system DC gain and \(K\) is a block diagonal integral gain design, then the closed-loop system remains stable under arbitrary connection and disconnection of integral loops, as well as under arbitrary block-wise detuning of the individual integral gains. We illustrate the result with a simple simulation.

\smallskip

\emph{Notation:} {\tb Throughout $\| \cdot \| \define \| \cdot \|_2$ denotes the induced matrix $2$-norm, with $\| \cdot \|_1$ the induced $1$-norm}. For $N \in \natrl$, $\indx{N} \define \{1, \ldots, N\}$ denotes an index set. With an abuse of notation, let $\alpha \subset \indx{N}$ denote \emph{ordered} index sets of cardinality $n_\alpha = |\alpha|$, with $\alpha(i)$ the $i$\textsuperscript{th} element of $\alpha$. When no ordering is explicitly prescribed, the standard ordering on $\natrl$ is assumed. For a block partitioned matrix {\tb $A$,}
where $A_{i,j}$ are of appropriate sizes, and two ordered index sets $\alpha,\beta \subset \indx{N}$, we denote by
\begin{equation} \label{eq:PartMatSub}
    A_{\alpha, \beta} \define
    \left[
    \begin{array}{c|c|c}
    A_{\alpha(1), \beta(1)} & \cdots &  A_{\alpha(1), \beta(n_\beta)} \\ \hline
    \vdots  & & \vdots \\ \hline
    A_{\alpha(n_\alpha), \beta(1)} &  \cdots & A_{\alpha(n_\alpha), \beta(n_\beta)}
    \end{array}
    \right]
\end{equation}
the submatrix formed from the blocks prescribed in $\alpha$ and $\beta$. Similarly, for a block row, block column, or block diagonal matrix $A$, we denote by $A_{\alpha}$ the submatrix formed from the blocks prescribed in $\alpha$. Finally, let $A$ be a square block partitioned matrix with square diagonal blocks, and let $\alpha \subset \indx{N}$ be an ordered index set. If $A_{\alpha,\alpha}$ is non-singular, the Schur complement of $A$ relative to $A_{\alpha, \alpha}$ is denoted by
\begin{equation}
    \schurcomp{A}{A_{\alpha, \alpha}} \define A_{\Bar{\alpha}, \Bar{\alpha}} - A_{\Bar{\alpha}, \alpha} A_{\alpha, \alpha}^{-1} A_{\alpha, \Bar{\alpha}}.
\end{equation} 

%------------------------------------
%------------------------------------
%------------------------------------

\section{Review of Block $D$-Stability Concepts}
\label{sec:DStab}

We briefly recall the key concepts and results of (robust) block $D$-stability; a thorough review of $D$-stability concepts can be found in \cite{OYK:19}. Consider {\tb a} block partitioned matrix {\tb $A$}
with blocks $A_{i,j} \in \real^{n_i \times n_j}$. {\tb The partition sizes $\{ n_i \}_{i=1}^N$ are fixed.}
\begin{equation*}
    \mathcal{D}_{\rm blk} = \left\{ \blkdiag{d_1 I_{n_1}}{d_N I_{n_N}} \,\,  \Big| \,\, d_1, \ldots, d_N > 0 \right\}.    
\end{equation*}

\begin{definition}\label{Def:DStab}
A block partitioned square matrix $A$ is \emph{block $D$-stable} if $DA$ is Hurwitz for every $D \in \mathcal{D}_\mathrm{blk}$.
\end{definition}

\smallskip

Block $D$-stability requires that all matrices generated by positive scaling of every block row of $A$ be Hurwitz. The following Lyapunov characterization is immediate from standard results (e.g.,\cite[Proposition 11.9.5]{DSB:11}).

\smallskip

\begin{proposition} \label{Prop:DStabLyap}
    A block partitioned square matrix $A$ is block $D$-stable if and only if for any ${\tb Q = Q^\transp > 0}$ and any $D \in \mathcal{D}_\mathrm{blk}$, there exists a unique ${\tb P_D = P_D^\transp > 0}$ such that 
    \begin{equation}\label{eq:DStabLyap}
        A^\transp D P_D + P_D D A = -Q.
    \end{equation}
\end{proposition}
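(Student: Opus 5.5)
The plan is to reduce the statement, for each fixed $D \in \mathcal{D}_\mathrm{blk}$, to the classical Lyapunov theorem for the single matrix $M_D \define DA$. Since the equation $A^\transp D P_D + P_D D A = -Q$ is exactly $M_D^\transp P_D + P_D M_D = -Q$ (using $D^\transp = D$), the proposition is the quantifier-wise application of the standard fact: a square matrix $M$ is Hurwitz if and only if for every $Q = Q^\transp > 0$ the Lyapunov equation $M^\transp P + P M = -Q$ has a unique solution, and this solution satisfies $P = P^\transp > 0$.

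For the forward direction, I will assume $A$ is block $D$-stable and fix an arbitrary $D \in \mathcal{D}_\mathrm{blk}$ and $Q = Q^\transp > 0$. Then $M_D = DA$ is Hurwitz by Definition~\ref{Def:DStab}. I will show the following.
\begin{itemize}
\item \emph{Existence.} Define $P_D \define \int_0^\infty e^{M_D^\transp t} Q e^{M_D t}\,dt$. This integral converges because $M_D$ is Hurwitz. It is symmetric and positive definite because $Q > 0$ and $e^{M_D t}$ is nonsingular. It satisfies the equation, which follows by differentiating $e^{M_D^\transp t} Q e^{M_D t}$ and integrating from $0$ to $\infty$.
\item \emph{Uniqueness.} The linear operator $P \mapsto M_D^\transp P + P M_D$ has eigenvalues $\lambda_i + \lambda_j$, where $\lambda_i, \lambda_j$ are eigenvalues of $M_D$. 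All of these have negative real part, so none is zero and the operator is injective.
\end{itemize}

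For the converse, I will assume that for every $D$ and $Q$ a positive definite solution exists. I will fix $D$ and take $Q = I$ with the corresponding $P_D > 0$. Let $M_D v = \lambda v$ with $v \neq 0$. Multiplying the equation by $v^*$ on the left and $v$ on the right gives
\[
2\,\mathrm{Re}(\lambda)\, v^* P_D v = -v^* Q v < 0 .
\]
Since $v^* P_D v > 0$, this forces $\mathrm{Re}(\lambda) < 0$. Hence $DA$ is Hurwitz, and since $D$ was arbitrary, $A$ is block $D$-stable. Only existence of a positive definite solution, for a single $Q$ per $D$, is needed in this direction.

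No step is a genuine obstacle; the argument is standard. The only points to get right are the quantifiers and two small details. The quantifier "for any $Q$ and any $D$" must be handled pointwise in $D$. In the converse, the quadratic-form argument must use complex eigenvectors (conjugate transpose) so that complex eigenvalues are covered. One could instead cite \cite[Proposition 11.9.5]{DSB:11} directly for each $M_D$.
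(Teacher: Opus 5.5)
Your proof is correct and matches the paper's reasoning. The paper gives no separate argument; it observes that the proposition is immediate from the standard Lyapunov theorem applied to $DA$ for each fixed $D \in \mathcal{D}_\mathrm{blk}$ (using $(DA)^\transp = A^\transp D$), which is exactly your reduction, and you simply prove that classical theorem yourself (integral formula, Kronecker-sum uniqueness, complex-eigenvector converse) rather than citing it.
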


\smallskip

A matrix which is block $D$-stable must of course be Hurwitz (take $D = I$), and in the case where there is only a single partition ($N = 1$), the property is clearly equivalent to Hurwitz stability. Block \emph{diagonal} stability of $A$ \textemdash{} the existence of a block diagonal ${\tb P = P^\transp > 0}$ such that ${\tb A^{\sf T}P + PA < 0}$ \textemdash{} is a common sufficient (but \emph{not} necessary) condition for block $D$-stability. By extension, this implies stable diagonally dominant matrices and stable Metzler matrices are block $D$-stable \cite[Chapter 6]{AB-RJP:94}. Verifying $D$-stability is in general difficult; see \cite{CRJ:74} for several additional sufficient conditions.

The $D$-stability property may unfortunately be lost under arbitrarily small perturbations to the matrix elements. For instance, consider the partitioned matrix 
\begin{equation} \label{eq:DStabEdgeEx}
    A_{\varepsilon} \define
    \left[
    \begin{array}{c|c}
    \varepsilon & 1 \\ \hline
    -1 & -2 
    \end{array}
    \right]
\end{equation}
where $\varepsilon \in \real$. One may verify that $A_0$ is $D$-stable, but that for any arbitrarily small $\varepsilon > 0$, $A_{\varepsilon}$ is not $D$-stable \cite{DJH:80}. This lack of robustness is addressed by defining \emph{robust block $D$-stability} (also known as \emph{strong block $D$-stability} \cite{EA:86}).

\smallskip

\begin{definition}\label{Def:RobDStab}
    A block partitioned square matrix $A$ is \emph{robustly block $D$-stable} if there exists $\mu > 0$ such that $A + B$ is block $D$-stable for every $B$ satisfying $\|B\| \leq \mu$.
\end{definition}

\smallskip

By definition, the set of robustly block $D$-stable matrices is the interior of the set of block $D$-stable matrices. Unlike block $D$-stability, robust block $D$-stability is inherited by block principal submatrices and block Schur complements.

\smallskip

\begin{proposition} \label{Prop:RobustSub}
    Suppose the partitioned square matrix $A$ is robustly block $D$-stable, and let $\alpha \subseteq \indx{N}$ be an ordered index set. Then, $A_{\alpha, \alpha}$ and $\schurcomp{A}{A_{\alpha, \alpha}}$ are robustly block $D$-stable.
\end{proposition}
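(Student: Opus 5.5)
The plan is to treat both claims as limits of eigenvalues under extreme block scalings, and then to recover the strict (robust) property by a shifting argument. First I reduce to the case where $\alpha$ lists its blocks first. Let $\Pi$ be the block permutation matrix that brings the blocks indexed by $\alpha$ to the front. Conjugating by $\Pi$ maps $\mathcal{D}_{\rm blk}$ onto the corresponding set for the permuted partition and preserves $\|\cdot\|$. Hence robust block $D$-stability is invariant under this reordering, and I may write
\[
A=\begin{bmatrix} A_{\alpha,\alpha} & A_{\alpha,\bar\alpha}\\ A_{\bar\alpha,\alpha} & A_{\bar\alpha,\bar\alpha}\end{bmatrix}.
\]
Let $\mu>0$ be as in Definition~\ref{Def:RobDStab} for $A$.

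\emph{Principal submatrix.} Fix $D_\alpha$ (a block scaling on the $\alpha$ blocks) and $E$ with $\|E\|<\mu$. Set $B=\mathrm{blkdiag}(E,0)$, so $\|B\|<\mu$, and $D_t=\mathrm{blkdiag}(D_\alpha,tI)\in\mathcal{D}_{\rm blk}$ for $t>0$. Then $D_t(A+B)$ is Hurwitz for every $t>0$. As $t\to0$ it converges to
\[
\begin{bmatrix} D_\alpha(A_{\alpha,\alpha}+E) & \ast\\ 0&0\end{bmatrix}.
\]
By continuity of the spectrum, every eigenvalue of $D_\alpha(A_{\alpha,\alpha}+E)$ lies in the closed left half-plane. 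This holds for all $\|E\|<\mu$ and all $D_\alpha$.

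To upgrade this to strict stability, suppose $M=D_\alpha(A_{\alpha,\alpha}+E)$ with $\|E\|<\mu/2$ has an eigenvalue $i\omega$. For small $\delta>0$ put $E'=E+\delta D_\alpha^{-1}$, which still satisfies $\|E'\|<\mu$. Then $D_\alpha(A_{\alpha,\alpha}+E')=M+\delta I$ has the eigenvalue $i\omega+\delta$ in the open right half-plane. This contradicts the previous paragraph. Hence $A_{\alpha,\alpha}+E$ is block $D$-stable whenever $\|E\|\le\mu/2$, i.e., $A_{\alpha,\alpha}$ is robustly block $D$-stable with margin $\mu/2$.

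\emph{Schur complement.} By the first part, $A_{\alpha,\alpha}+B_{\alpha,\alpha}$ is Hurwitz, hence nonsingular, for $\|B\|<\mu$. Take $B=\mathrm{blkdiag}(0,F)$ with $\|F\|<\mu$. Then the Schur complement of $A+B$ relative to $A_{\alpha,\alpha}$ equals $S+F$, where $S=\schurcomp{A}{A_{\alpha,\alpha}}$. Fix $D_{\bar\alpha}$ and use $D_t=\mathrm{blkdiag}(I,tD_{\bar\alpha})$.

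The key step, which I expect to be the main technical obstacle, is a two-time-scale eigenvalue statement. Because the $(1,1)$ block $A_{\alpha,\alpha}$ is invertible, $D_t(A+B)$ has exactly $n_{\bar\alpha}$ eigenvalues of the form $t\lambda_j+o(t)$, where the $\lambda_j$ are the eigenvalues of $D_{\bar\alpha}(S+F)$. The remaining eigenvalues converge to those of $A_{\alpha,\alpha}$. I would justify this by block-diagonalizing with the standard decoupling (Riccati-type) transformation. Alternatively, I would use $\det(\lambda I-D_t(A+B))$ with $\lambda=t\nu$ and Schur's determinant formula, and let $t\to0$ to get the limit $\det(-A_{\alpha,\alpha})\det(\nu I-D_{\bar\alpha}(S+F))$ after dividing by $t^{n_{\bar\alpha}}$; Hurwitz's theorem then locates the small roots.

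Since $D_t(A+B)$ is Hurwitz for all $t>0$, it follows that $\mathrm{Re}\,\lambda_j\le0$. The same shift argument as before closes the proof: replace $F$ by $F+\delta D_{\bar\alpha}^{-1}$, which moves an imaginary-axis eigenvalue of $D_{\bar\alpha}(S+F)$ strictly into the right half-plane. This gives a contradiction, so $S$ is robustly block $D$-stable with margin $\mu/2$.
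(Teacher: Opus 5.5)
Your proof is correct. The paper itself gives no argument for Proposition~\ref{Prop:RobustSub}. It only states that the result is an immediate extension of the scalar (diagonal) results in \cite{CRJ:75} and \cite[Lemma 1]{JL-TFE:01} to the block setting.

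What you have written is a self-contained, block-level version of that reasoning, working directly from Definition~\ref{Def:RobDStab}:
\begin{itemize}
\item The degenerate scalings $\mathrm{blkdiag}(D_\alpha,tI)$ and $\mathrm{blkdiag}(I,tD_{\bar\alpha})$ with $t\to 0$ place the spectra of $D_\alpha(A_{\alpha,\alpha}+E)$ and $D_{\bar\alpha}(S+F)$ in the closed left half-plane.
\item For the Schur complement this goes through $t^{-n_{\bar\alpha}}\det(t\nu I-D_t(A+B))\to\det(-A_{\alpha,\alpha})\det(\nu I-D_{\bar\alpha}(S+F))$ and Hurwitz's theorem. This is exactly where invertibility of $A_{\alpha,\alpha}$ is used.
\item The shift $E\mapsto E+\delta D_\alpha^{-1}$ (resp.\ $F\mapsto F+\delta D_{\bar\alpha}^{-1}$) then uses the slack in the unstructured perturbation ball to exclude imaginary-axis eigenvalues.
\end{itemize}

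Compared with the citation, your route does three things. It makes the block extension explicit rather than asserted. It gives an explicit margin; in fact any $\mu'<\mu$ works. And it isolates where robustness is needed: without the shift step you only get semistability. The paper's own example $A_0$ in \eqref{eq:DStabEdgeEx} shows this, since it is $D$-stable while its $(1,1)$ entry is $0$.

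Two cosmetic points. First, your shift argument works for every $\|E\|<\mu$, not only $\|E\|<\mu/2$. You tacitly use this when you later invoke the first part for $\|B\|<\mu$, although there only nonsingularity of $A_{\alpha,\alpha}$ is needed, since $B_{\alpha,\alpha}=0$. Second, the degenerate cases $\alpha=\emptyset$ and $\alpha=\indx{N}$ should be dismissed explicitly.
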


\smallskip

{\tb The proof of Proposition \ref{Prop:RobustSub} is an immediate extension of \cite{CRJ:75} and \cite[Lemma 1]{JL-TFE:01} to the case of (robust) block $D$-stability.}
Most sufficient conditions for block $D$-stability are strong enough to imply robust block $D$-stability, including block diagonal stability (see \cite{WSK:02} for robust extensions of \cite{CRJ:74}). A Lyapunov characterization in the robust case can be obtained by extending Proposition \ref{Prop:DStabLyap} in the obvious fashion.

\smallskip

\begin{proposition} \label{Prop:RobDStabLyap}
    A block partitioned square matrix $A$ is robustly block $D$-stable if and only if there exists $\mu > 0$ such that for any ${\tb Q = Q^\transp > 0}$, any $D \in \mathcal{D}_\mathrm{blk}$, and any $B \in \real^{n \times n}$ such that $\| B \| \leq \mu $, there exists a unique ${\tb P_{D,B} = P_{D,B}^\transp > 0}$ such that 
    \begin{equation}\label{eq:RobDStabLyap}
        (A+B)^\transp D P_{D,B} + P_{D,B} D (A+B) = -Q.
    \end{equation}
\end{proposition}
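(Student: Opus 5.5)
The plan is to reduce Proposition \ref{Prop:RobDStabLyap} to Proposition \ref{Prop:DStabLyap} by applying the latter to each perturbed matrix $A+B$. The only technical point is to keep the constant $\mu$ coming from Definition \ref{Def:RobDStab} fixed throughout, so that it plays the same role in both directions.

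For necessity, suppose $A$ is robustly block $D$-stable, and let $\mu > 0$ be the constant from Definition \ref{Def:RobDStab}. Fix any $B$ with $\|B\| \leq \mu$. Then $A+B$ is block $D$-stable, since the partition $\{n_i\}$ is fixed and $A+B$ inherits it. Proposition \ref{Prop:DStabLyap}, applied to $A+B$, then says that for every $Q = Q^\transp > 0$ and every $D \in \mathcal{D}_{\rm blk}$ there is a unique $P_{D,B} = P_{D,B}^\transp > 0$ solving \eqref{eq:RobDStabLyap}. If one prefers to argue directly rather than cite Proposition \ref{Prop:DStabLyap}, the solution is given by the standard integral formula
\[
P_{D,B} = \int_0^\infty e^{(D(A+B))^\transp t} Q e^{D(A+B)t}\,dt .
\]
This integral converges because $D(A+B)$ is Hurwitz. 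Uniqueness follows because the Lyapunov operator $P \mapsto (D(A+B))^\transp P + P D(A+B)$ is invertible whenever no two eigenvalues of $D(A+B)$ sum to zero.

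For sufficiency, take the $\mu$ supplied by the hypothesis and fix $B$ with $\|B\| \leq \mu$. For each $D \in \mathcal{D}_{\rm blk}$, choose $Q = I$. The hypothesis gives $P_{D,B} > 0$ satisfying \eqref{eq:RobDStabLyap}, and the standard Lyapunov theorem then implies that $D(A+B)$ is Hurwitz. To see this, let $D(A+B)v = \lambda v$ with $v \neq 0$. Then
\[
2\,\mathrm{Re}(\lambda)\, v^* P_{D,B} v = -v^* Q v < 0 ,
\]
and since $v^* P_{D,B} v > 0$, this forces $\mathrm{Re}(\lambda) < 0$. Because $D$ was arbitrary, $A+B$ is block $D$-stable. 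Because $B$ was arbitrary with $\|B\| \leq \mu$, Definition \ref{Def:RobDStab} holds with this same $\mu$.

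The main obstacle is only bookkeeping. We must check that the quantifier order in the statement (one $\mu$ that works uniformly for all $Q$, $D$ and $B$) matches Definition \ref{Def:RobDStab}. It does, because $\mu$ depends only on $A$, and the Lyapunov equation is solved pointwise in $(Q,D,B)$. No uniform bound on $P_{D,B}$ is asserted, so no compactness argument is needed.
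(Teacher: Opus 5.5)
Your proposal is correct and is essentially the paper's argument. The paper justifies Proposition~\ref{Prop:RobDStabLyap} only by saying it follows by extending Proposition~\ref{Prop:DStabLyap} ``in the obvious fashion'', that is, by applying Proposition~\ref{Prop:DStabLyap} to each $A+B$ with $\|B\| \leq \mu$ for the fixed $\mu$ of Definition~\ref{Def:RobDStab}, which is exactly what you do; your eigenvector argument in the sufficiency direction just unpacks the ``if'' half of Proposition~\ref{Prop:DStabLyap}, and it rightly needs only existence of $P_{D,B}$ with $Q = I$.
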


\smallskip

Proposition \ref{Prop:RobDStabLyap} is unfortunately too cumbersome to apply, as it involves quantification over the perturbation matrix $B$, which ranges over the set $\setdef{B}{\|B\| \leq \mu}$. Moreover, the result provides little new insight into the robust $D$-stability property, and what separates it from mere $D$-stability. We next develop a much simpler Lyapunov characterization of robust block $D$-stability which eliminates the quantification over the perturbation matrix.
 
%------------------------------------
%------------------------------------
%------------------------------------

\section{Main Result: A Lyapunov Characterization of Robust Block $D$-Stability}
\label{sec:RobDStabLyap}

{\tb We now present a characterization of robust block $D$-stability which addresses the drawbacks of Proposition \ref{Prop:RobDStabLyap}.}

\smallskip

\begin{theorem} \label{Theorem:RobStabLyap}
A block partitioned square matrix $A$ is robustly block $D$-stable if and only if for any {\tb fixed} ${\tb Q = Q^\transp > 0}$, there exists $M > 0$ such that for any $D \in \mathcal{D}_\mathrm{blk}$
\begin{enumerate}
    \item[(i)] there exists a unique ${\tb P_D = P_D^\transp > 0}$ satisfying \eqref{eq:DStabLyap};
    \item[(ii)] $\| P_D D \| \leq M$.
\end{enumerate}
\end{theorem}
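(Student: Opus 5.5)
The proof splits into the two directions. Sufficiency is a direct Lyapunov estimate. Necessity is where the work lies, and I would attempt it by contradiction and compactness; that part is a plan I have not checked in detail. Throughout, write $X_D \define P_D D$, so that \eqref{eq:DStabLyap} reads $A^\transp X_D^\transp + X_D A = -Q$.

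\emph{Sufficiency.} Fix $Q = Q^\transp > 0$, and let $M$ and $\{P_D\}$ be as in the statement. Take any $D \in \mathcal{D}_\mathrm{blk}$ and any $B$ with $\|B\| \leq \mu$, where $\mu$ is to be chosen. Along $\dot x = D(A+B)x$, the function $V(x) = x^\transp P_D x$ satisfies
\begin{equation*}
\dot V = x^\transp\big[(A+B)^\transp D P_D + P_D D (A+B)\big]x = -x^\transp Q x + 2 x^\transp P_D D B x \leq -\big(\lambda_{\min}(Q) - 2M\mu\big)\|x\|^2 .
\end{equation*}
Choose $\mu = \lambda_{\min}(Q)/(4M)$; this is independent of $D$. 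Then $\dot V < 0$ for $x \neq 0$ with $V > 0$, so $D(A+B)$ is Hurwitz for every $D \in \mathcal{D}_\mathrm{blk}$. Hence $A+B$ is block $D$-stable for all $\|B\| \leq \mu$, i.e., $A$ is robustly block $D$-stable.

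\emph{Necessity.} Item (i) follows at once from Proposition~\ref{Prop:DStabLyap}, since robust block $D$-stability implies block $D$-stability. For (ii), first note a scaling invariance: $P_{cD} = P_D/c$ for $c > 0$, so $X_{cD} = X_D$. We may therefore normalize $\max_i d_i = 1$. On the set of such $D$ with all $d_i > 0$, the solution $P_D = \int_0^\infty e^{A^\transp D t} Q e^{DAt}\,dt$ depends continuously on $D$, because $DA$ is Hurwitz there. So if $\|X_D\|$ were unbounded, there would be a sequence $D_k \to \bar D$ in which some blocks $d_i \to 0$.

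Group the indices into $\beta$, the blocks with $d_i \not\to 0$, and $\alpha$, the blocks with $d_i \to 0$; refine this by rates if the vanishing blocks go to zero at different speeds. The limit is then a multi-time-scale singular perturbation problem. In the $X$-form of the Lyapunov equation, I would use the block-triangular change of variables that decouples fast and slow blocks. Along each time scale, the Lyapunov equation then splits into equations for $\bar D_{\beta} (\schurcomp{A}{A_{\alpha,\alpha}})$ (slow part) and for $D_{\alpha}A_{\alpha,\alpha}$ suitably rescaled (fast part).

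By Proposition~\ref{Prop:RobustSub}, both $A_{\alpha,\alpha}$ and $\schurcomp{A}{A_{\alpha,\alpha}}$ are robustly block $D$-stable, so by induction on $N$ their own $X$-matrices are uniformly bounded. Note that $P_D D$ is the right object here: the fast part of $P_D$ blows up like $1/d_i$, but multiplying by $D$ cancels this. The induction therefore should give a bound on $\|X_{D_k}\|$, contradicting unboundedness. The base case $N=1$ is trivial, since then $X_D = X_I$ by scaling.

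\emph{Main obstacle.} The hard step is making this singular-perturbation limit rigorous. One has to show that the coupling terms between the two time scales stay bounded uniformly, and to handle several vanishing rates at once.

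If the induction becomes too unwieldy, I would instead try a direct contradiction argument. Suppose $\|X_D\| = \nu$ is large, with unit singular vectors $X_D v = \nu u$. One would then look for a real $B$ with $\|B\| \leq \mu$, of rank one and built from $u$ and $v$, such that $D(A+B)$ has an eigenvalue on the imaginary axis. This would use the identity $X_D = \int_0^\infty e^{A^\transp D t}Qe^{DAt}D\,dt$ and a frequency-domain bound on $(sI - DA)^{-1}D$, namely that its $H_\infty$ norm is at most of order $1/\mu$ by a stability-radius argument. I expect the real (as opposed to complex) stability radius to be the delicate point in that route.
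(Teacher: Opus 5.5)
Your sufficiency argument and item (i) of necessity are correct and essentially the paper's: perturb the Lyapunov equation by $B$ and absorb $2\|P_DD\|\,\|B\|$ into $\lambda_{\min}(Q)$.

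\textbf{The gap is necessity of (ii).} What you give there is a heuristic. The step you flag as the ``main obstacle'' is a uniform multi-time-scale singular-perturbation estimate. It would have to transfer bounds for the subproblems, obtained by induction, to a bound on $\|X_{D_k}\|$, with uniformly controlled coupling terms and several vanishing rates. That step is the whole content of this direction, and it is not carried out.

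The heuristic also has the time scales reversed. Blocks with $d_i\not\to 0$ are the \emph{fast} ones. The natural split is therefore into the fast system $\bar D_\beta A_{\beta,\beta}$ and the rescaled reduced system $D_\alpha(\schurcomp{A}{A_{\beta,\beta}})$, not $\bar D_\beta(\schurcomp{A}{A_{\alpha,\alpha}})$. Both are robustly block $D$-stable by Proposition~\ref{Prop:RobustSub}, so this is repairable, but it shows the decomposition has not been worked out.

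The fallback route is likewise incomplete. $X_D=\int_0^\infty (e^{DAt})^\transp Q e^{DAt}D\,dt$ is an $H_2$-type quantity. You have neither a link from it to an $H_\infty$ bound on $(sI-DA)^{-1}D$, nor a way to extract such a bound from robustness against \emph{real} perturbations only.

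\textbf{How the paper closes this.} It avoids quantitative estimates entirely.
\begin{enumerate}
\item Assume $\|P_{D^k}D^k\|\to\infty$ and normalize $X^k = P_{D^k}D^k/\|P_{D^k}D^k\|$. Then $\|X^k\|=1$ and $\|A^\transp (X^k)^\transp + X^kA\| = \|Q\|/\|P_{D^k}D^k\|\to 0$. Extract a limit $X^\star$ with $\|X^\star\|=1$ and $A^\transp (X^\star)^\transp+X^\star A=0$.
\item That equation alone does not force $X^\star=0$: its solution set is $\{ZA^{-1} : Z^\transp=-Z\}$. The extra ingredient is the symmetry $D^kX^k = (X^k)^\transp D^k$ inherited from $P_{D^k}$.
\item Write $D^k=\Sigma^\transp\tilde D^kT^k\Sigma$ (a permutation, a part lying in a compact set, and strictly separated rates $\tau_i^k$), and set $Y^k = \tilde D^k\Sigma X^k\Sigma^\transp$. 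The constraint becomes $\tau_i^kY^k_{i,j} = (Y^k_{j,i})^\transp\tau^k_j$.
\item Hence the limit $Y^\star$ is block lower triangular with symmetric diagonal blocks. It satisfies $\bar A^\transp (Y^\star)^\transp+Y^\star\bar A = 0$ with $\bar A = \Sigma A\Sigma^\transp\tilde D^\star$.
\item $\bar A_{1,1}$, the fastest group, is Hurwitz, which gives $Y^\star_{1,1}=0$. The off-diagonal block then gives $Y^\star_{\alpha,1} = -Y^\star_{\alpha,\alpha}\bar A_{\alpha,1}\bar A_{1,1}^{-1}$.
\item What remains is the same homogeneous equation for $\schurcomp{\bar A}{\bar A_{1,1}}$, which is robustly block $D$-stable. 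Recursion gives $Y^\star=0$, contradicting $\|X^\star\|=1$.
\end{enumerate}

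Only qualitative facts about the subproblems are used: the Hurwitz property and robust $D$-stability of Schur complements. Uniform bounds from an induction hypothesis are never needed, which is exactly why no coupling estimates arise. Your singular-perturbation intuition, with fast block first and Schur complement as the reduced system, is the right one. The compactness-plus-symmetry device is the idea that turns it into a proof.
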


\medskip

Comparing Theorem \ref{Theorem:RobStabLyap} with Proposition \ref{Prop:DStabLyap}, robust block $D$-stability is equivalent to block $D$-stability with a \emph{uniform} bound on the Lyapunov quantity $\| P_D D \|$.
{\tb Theorem \ref{Theorem:RobStabLyap} has several key advantages over Proposition \ref{Prop:RobDStabLyap}. First, the uniform bound $\| P_D D \| \leq M$ provides new intuition on how solutions of the Lyapunov equation scale with $D$. Second, the bound provides significant utility within Lyapunov or small-gain type arguments, as will be demonstrated in the proof of Theorem \ref{Theorem:ProbSol}. 
}
{\tb Further, assuming the existence of a unique solution $P_D$ to \eqref{eq:DStabLyap} which is parameterized by $N$ variables $(d_1, \ldots, d_N)$, one may check positive definiteness of $P_D$ and boundedness of $\| P_D D \| $ to verify robust block $D$-stability. Such an approach is computationally viable for small $N$; in contrast, a similar approach leveraging Proposition \ref{Prop:RobDStabLyap} is much more computationally intensive, as $P_{D,B}$ would need to be parameterized by $N$ variables $(d_1, \ldots, d_N)$ plus $n^2$ variables corresponding to $B$.}
{\tb The authors note the development of computational methods for verifying (and designing) $D$-stable matrices appears to be a mostly open area of research.}

The uniform bound $\| P_D D \| \leq M$ fails for merely $D$-stable matrices: consider the matrix $A_\varepsilon$ defined in \eqref{eq:DStabEdgeEx}, which is robustly $D$-stable for $\varepsilon < 0$ and $D$-stable for $\varepsilon \leq 0$. Selecting $Q = I$, given arbitrary ${\tb D = \mathrm{diag}(d_1,d_2) > 0}$, the unique solution to \eqref{eq:DStabLyap} is
\[
    P_D = \frac{1}{ f(d_1, d_2, \varepsilon)} \begin{bmatrix} \frac{d_1(1 - 2 \varepsilon) + 5 d_2}{d_1} & (\varepsilon + 2) \\ (\varepsilon + 2) & \frac{d_1(\varepsilon^2 + 1) + d_2(1 - 2\varepsilon)}{d_2} \end{bmatrix},
\]
where we have defined $f(d_1, d_2, \varepsilon) \define d_1\varepsilon(4\varepsilon - 2) + d_2(4 - 8\varepsilon)$. Setting $\varepsilon = 0$ and $d_2 = 1$, one can calculate {\tb$ \| P_D D \| \geq 2^{-1/2}\| P_D D \|_1 = (3 d_1 + 5)/2^{5/2}$ \cite[Fact 9.8.12]{DSB:11}}, which is unbounded as $d_1 \rightarrow \infty$. Thus, an upper bound on $\| P_D D \|$ which is uniform for all $D \in \mathcal{D}_\mathrm{blk}$ cannot be established for $A_0$. We contrast the case where $\varepsilon = -1$, where one can calculate {\tb $\| P_D D \| \leq \| P_D D \|_1 = (4d_1 + 5d_2) / (6d_1 + 12d_2)$ }, a quantity which is obviously bounded over all $d_1, d_2 > 0$. We now continue with the proof.

\smallskip

\smallskip
\begin{proof}
We first show the conditions given in Theorem \ref{Theorem:RobStabLyap} imply robust $D$-stability of $A$. Let $\mu \in (0,\lambda_{\rm min}(Q) / 2M)$, and consider any perturbation $B \in \real^{n \times n}$ such that $\| B \| \leq \mu$. For $D \in \mathcal{D}_\mathrm{blk}$, ${\tb P_D = P_D^\transp > 0}$ satisfying \eqref{eq:DStabLyap}, we calculate that
\begin{equation} \label{Eq:DStabLyapPert}
    (A+B)^\transp D P_D + P_D D (A+B) = -\tilde{Q},
\end{equation}
where we have defined
\begin{equation*}
    \tilde{Q} \define Q - (B^\transp D P_D + P_D D B).
\end{equation*}
Standard eigenvalue bounds \cite[Lemma 8.4.1]{DSB:11} imply that
{\tb
\begin{equation*}
\begin{aligned}
    B^\transp D P_D + P_D D B &\leq \lambda_\mathrm{max}(B^\transp D P_D + P_D D B) I \\
    &\leq 2 \| P_D D \| \| B \| I \\
    &\leq 2 M \mu I < \lambda_\mathrm{min}(Q) I,
\end{aligned}
\end{equation*}}
and thus ${\tb \tilde{Q} > 0}$. Thus, since $\tilde{Q}, {\tb P_D > 0}$ in \eqref{Eq:DStabLyapPert}, Lyapunov's theorem implies that $D(A+B)$ is Hurwitz. As this argument holds for any $D \in \mathcal{D}_\mathrm{blk}$, $B$ such that $\| B \| \leq \mu$, $A$ is robustly block $D$-stable.

\smallskip

We now show robust block $D$-stability of $A$ implies the conditions given in Theorem \ref{Theorem:RobStabLyap}. As $A$ is assumed robustly block $D$-stable, it must be block $D$-stable trivially. By Proposition \ref{Prop:DStabLyap}, for any ${\tb Q > 0}$, given any $D \in \mathcal{D}_\mathrm{blk}$, there exists ${\tb P_D > 0}$ such that \eqref{eq:DStabLyap} is satisfied. We must show there exists $M > 0$ such that $\| P_D D \| \leq M$ uniformly in $D \in \mathcal{D}_{\rm blk}$.
The proof proceeds by contradiction. Suppose that $A$ is robustly $D$-stable, yet there exists a sequence $\{ D^k \}_{k=1}^{\infty} \subset \mathcal{D}_{\rm blk}$ and corresponding sequence $\{ P_{D^k} \}_{k=1}^{\infty}$ determined by \eqref{eq:DStabLyap} such that $\{\|P_{D^k}D^k\| \}_{k=1}^{\infty} \subset \real$ is unbounded. 
We express the sequence $\{D^k\}_{k=1}^\infty$ in a helpful form using the following lemma, which is proved in Appendix \ref{Appendix:SupportProofs}.

\smallskip

\begin{lemma} \label{Lemma:BlockDSeq}
    Consider any sequence $\{ D^k\}_{k=1}^\infty \subset \mathcal{D}_\mathrm{blk}$. Then $D^k$ can be expressed in the form {\tb $D^k = \Sigma^\transp \Tilde{D}^k T^k \Sigma$,}
where
\begin{enumerate}
\item[(i)] $\Sigma$ is a (block) permutation matrix;
\item[(ii)] $\Bar{N} \leq N$ is a positive integer;
\item[(iii)] $\Tilde{D}^k = \blkdiag{\tilde{D}_1^k}{\tilde{D}_{\bar{N}}^k} \in \mathcal{C}$, where $\mathcal{C} \subset \mathcal{D}_{\rm blk}$ is a compact subset;
\item[(iv)] $T^k = \blkdiag{\tau_1^k I}{\tau_{\bar{N}}^k I} \in \mathcal{D}_\mathrm{blk}$ satisfying $\lim_{k \rightarrow \infty} \tau_i^k / \tau_j^k = 0$ if and only if $i > j$. 
\end{enumerate}
\end{lemma}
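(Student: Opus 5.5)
Note first that the statement needs a subsequence. If the ratios $d_i^k/d_j^k$ oscillate (say between $k$ and $1/k$), no single permutation $\Sigma$ can order the blocks for all $k$. So I will prove the lemma after passing to a subsequence of $\{D^k\}$. That is enough for the contradiction argument in the proof of Theorem \ref{Theorem:RobStabLyap}, since a subsequence of an unbounded sequence $\{\|P_{D^k}D^k\|\}$ can be chosen to remain unbounded. The plan is to work with the logarithms $x_i^k \define \log d_i^k$ of the block scalings $D^k = \blkdiag{d_1^k I_{n_1}}{d_N^k I_{n_N}}$. I will group block indices into classes whose scalings are comparable, meaning their ratio stays bounded away from $0$ and $\infty$, and then order the classes by their growth rate.

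\emph{Step 1 (subsequence).} For each of the finitely many pairs $(i,j)$, the sequence $x_i^k - x_j^k$ lies in the compact space $[-\infty,+\infty]$. Extracting successively, once per pair, gives a single subsequence along which every difference converges to some $\ell_{ij} \in [-\infty,+\infty]$. These limits satisfy $\ell_{ij} = -\ell_{ji}$ and $\ell_{ii}=0$. Whenever the sum is defined, $\ell_{ij} + \ell_{jm} = \ell_{im}$.

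\emph{Step 2 (classes and order).} Define $i \sim j$ iff $\ell_{ij}$ is finite. This is an equivalence relation: reflexivity and symmetry are immediate, and transitivity follows from additivity of the limits. For distinct classes $[i] \neq [j]$, set $[i] \succ [j]$ iff $\ell_{ij} = +\infty$. I need this to be well defined and a strict total order.
\begin{itemize}
\item It is well defined because shifting $i$ or $j$ within its class changes $\ell_{ij}$ by a finite amount.
\item It is total because for inequivalent indices $\ell_{ij} = \pm\infty$.
\item It is transitive because $+\infty + (+\infty) = +\infty$.
\end{itemize}
Checking this carefully is the only slightly delicate step. Label the classes $\mathcal{K}_1 \succ \cdots \succ \mathcal{K}_{\bar N}$, so $\bar N \le N$. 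Then choose the block permutation $\Sigma$ so that the blocks of $\mathcal{K}_1$ come first, then those of $\mathcal{K}_2$, and so on.

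\emph{Step 3 (factorization).} In each class $\mathcal{K}_m$ fix a representative $r_m$ and set $\tau_m^k \define d_{r_m}^k$. For $i \in \mathcal{K}_m$ set $\tilde d_i^k \define d_i^k/\tau_m^k$, and let $\tilde D_m^k$ be the corresponding block diagonal matrix. By construction $\Sigma D^k \Sigma^\transp = \tilde D^k T^k$, which gives $D^k = \Sigma^\transp \tilde D^k T^k \Sigma$.

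For $i \in \mathcal{K}_m$, $\tilde d_i^k \to e^{\ell_{i r_m}} \in (0,\infty)$. A convergent sequence of positive numbers with positive limit is contained in some interval $[a,b] \subset (0,\infty)$. Therefore all $\tilde D^k$ lie in the compact set $\mathcal{C}$ of block scalings with entries in $[a,b]$, and $\mathcal{C} \subset \mathcal{D}_{\rm blk}$.

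Finally, $\tau_m^k/\tau_{m'}^k = \exp(x_{r_m}^k - x_{r_{m'}}^k)$. This tends to $0$ exactly when $\ell_{r_m r_{m'}} = -\infty$, i.e.\ when $\mathcal{K}_{m'} \succ \mathcal{K}_m$, which is when $m > m'$. For $m \le m'$ the limit is $+\infty$ or $1$, so the "only if" direction also holds, giving (iv).
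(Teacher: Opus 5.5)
Your proof is correct. It shares the paper's basic decomposition: group the block indices into classes of mutually comparable scalings, order the classes by growth, let $\tau_m^k$ be the scaling of a class representative, and normalize within each class.

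Where you differ is in how the classes are obtained, and the difference matters. The paper does not pass to a subsequence. It peels classes off recursively, taking $\Lambda_\ell$ to be the remaining indices $i$ with $\limsup_k d_j^k/d_i^k<\infty$ for every remaining $j$, and asserts (iv) ``by construction''. Your objection shows this cannot work in general:
\begin{itemize}
\item For $d_1^k=k^{(-1)^k}$, $d_2^k=1$, one gets $\Lambda_1=\emptyset$, so $\bar N$ is never defined.
\item For $d_1^k=1$ with $d_2^k$ alternating between $1$ and $1/k$, the recursion gives $\Lambda_1=\{1\}$ and $\Lambda_2=\{2\}$, but $\tau_2^k/\tau_1^k$ has no limit.
\end{itemize}
In both examples no factorization of the stated form exists for the whole sequence. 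So the lemma must be read along a subsequence. Your initial extraction makes every $\log d_i^k-\log d_j^k$ converge in $[-\infty,+\infty]$, which turns the classes and their ordering into an exact equivalence relation and a strict total order. It is therefore the missing ingredient in the paper's argument, not a mere convenience. It also makes $\tilde D^k$ converge outright, so the later compactness extraction for $\tilde D^k$ in the proof of Theorem \ref{Theorem:RobStabLyap} is no longer needed.

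One ordering detail in that application needs tightening. First pass to a subsequence along which $\|P_{D^k}D^k\|\to\infty$, and only then perform your Step 1. That proof needs divergence anyway, since it uses $\varepsilon_k=\|Q\|/\|P_{D^k}D^k\|\to 0$. If you extract for the ratios first, you could land on a subsequence along which $\|P_{D^k}D^k\|$ stays bounded.
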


\smallskip

The idea is that {\tb Lemma \ref{Lemma:BlockDSeq}}
organizes the $\bar{N}$ relative rates of growth or decay of the blocks of $D^k$.
{\tb Going forward, a variable substitution enabled by Lemma \ref{Lemma:BlockDSeq} will facilitate the derivation of a Sylvester equation from \eqref{eq:DStabLyap}. The solution of this equation at the limit $k \rightarrow \infty$ will lead to a contradiction which will complete the proof. We now proceed.}
As $\{\Tilde{D}^k\}_{k=1}^\infty$ lies in the compact set $\mathcal{C}$, it admits a convergent subsequence in $\mathcal{C}$ \cite[Theorem 3.6]{WR:76}; we relabel the sequence without loss of generality, and denote by $\tilde{D}^{\star} \define \lim_{k \rightarrow \infty} \tilde{D}^k \in \mathcal{D}_\mathrm{blk}$ the limiting value of the new sequence $\{\tilde{D}^k\}_{k=1}^{\infty}$.
Returning now to \eqref{eq:DStabLyap}, define $X^k \define P_{D^k} D^k / \| P_{D^k}D^k \|$. By construction, $\| X^k \| = 1$ and by symmetry of $P_{D^k}$ and $D^k$, it holds that 
\begin{equation}\label{Eq:XkSymmetry}
D^k X^k = {X^k}^\transp D^k.
\end{equation}
Since $\|X^k\| = 1$, the sequence $\{ X^k \}_{k=1}^{\infty}$ lies in a compact set, and thus also admits a convergent subsequence; relabelling the sequence again without loss of generality, we denote by $X^{\star} \define \lim_{k \rightarrow \infty} X^k$ the limiting value. 
{\tb Dividing though the Lyapunov equation \eqref{eq:DStabLyap} by $\| P_{D^k}D^k \|$, substituting the definition of $X^k$, and taking the norm of both sides of the equation, we obtain}
\begin{equation}\label{Eq:PDNormDim}
    \left \| A^\transp {X^k}^\transp +  X^k A \right \| = \varepsilon_k,
\end{equation}
where $\varepsilon_k = \|Q\| / \| P_{D^k}D^k \| > 0$ converges to zero. 
Consider now the coordinate transformation
\[
Y^k \define \Tilde{D}^k \Sigma X^k \Sigma^\transp \qquad \Longleftrightarrow \qquad X^k = \Sigma^{\transp}(\tilde{D}^k)^{-1}Y^k\Sigma.
\]
It follows from \eqref{Eq:XkSymmetry} that $\{ Y^k \}_{k=1}^{\infty}$ satisfies $T^k Y^k = {Y^k}^\transp T^k$.
Let $Y^{\star} = \Tilde{D}^{\star} \Sigma X^{\star} \Sigma^\transp$ denote the limit of the sequence $\{ Y^k \}_{k=1}^{\infty}$. Block by block, the equality $T^k Y^k = {Y^k}^\transp T^k$ states that
\begin{equation}
    \tau_i^k Y_{i,j}^k = {Y_{j,i}^k}^\transp \tau_j^k, \qquad i,j \in \indx{\bar{N}}.
\end{equation}
For $i=j$, we immediately see that $Y_{i,i}^k = {Y_{i,i}^k}^\transp$, implying $Y_{i,i}^{\star} = {Y_{i,i}^{\star}}^\transp$. For $j > i$, $ \lim_{k \rightarrow \infty} \tau_j^k / \tau_i^k = 0$, implying $Y^{\star}_{i,j} = 0$. Thus, $Y^{\star}$ is block lower triangular with symmetric matrices as its diagonal blocks. 
{\tb Substituting for $X^k$ in terms of $Y^k$, \eqref{Eq:PDNormDim}  now reads as}
\begin{equation} \label{Eq:PDEqEps}
\begin{aligned}
    &\|A^\transp \Sigma^\transp {Y^k}^\transp (\Tilde{D}^k)^{-1} \Sigma + \Sigma^\transp (\Tilde{D}^k)^{-1} Y^k \Sigma A\| = \varepsilon_k.
\end{aligned}
\end{equation}
The sequence inside the norm in \eqref{Eq:PDEqEps} converges, and thus
\begin{equation*}
A^\transp \Sigma^\transp {Y^{\star}}^\transp (\Tilde{D}^{\star})^{-1} \Sigma + \Sigma^\transp (\Tilde{D}^{\star})^{-1} Y^{\star} \Sigma A = 0
\end{equation*}
which, rearranging, implies
\begin{equation}\label{Eq:PDEq0}
 \bar{A}^\transp (Y^{\star})^\transp  + (Y^{\star}) \bar{A} = 0
\end{equation}
where $\Bar{A} \define \Sigma A \Sigma^\transp \Tilde{D}^{\star}$. Since $A$  is robustly block $D$-stable, so is $\bar{A}$ \cite[Observation 2]{CRJ:74}. We now decompose $\Bar{A}$, $Y^{\star}$ such that 
\begin{equation} \label{Eq:PDABarYStarExp}
\begin{aligned}
    \Bar{A} = \begin{bmatrix} \Bar{A}_{1,1} & \Bar{A}_{1, \alpha} \\ \Bar{A}_{\alpha, 1} & \Bar{A}_{\alpha, \alpha} \end{bmatrix},\,\,
\end{aligned}
\begin{aligned}
    Y^{\star} = \begin{bmatrix} Y^{\star}_{1,1} & 0 \\ Y^{\star}_{\alpha, 1} & Y^{\star}_{\alpha, \alpha} \end{bmatrix},
\end{aligned}
\end{equation}
where $\alpha \define \indx{{\tb \Bar{N}} } \setminus \{1\}$. From the previous conclusion, $Y_{1,1}^{\star}$ is symmetric, and $Y^{\star}_{\alpha, \alpha}$ is lower block triangular with symmetric block diagonals. Substituting \eqref{Eq:PDABarYStarExp} into \eqref{Eq:PDEq0}, we obtain the three equations
\begin{subequations}\label{Eq:PDABarYStar}
\begin{align}
    \Bar{A}_{1,1}^\transp Y^{\star}_{1,1} + Y^{\star}_{1,1} \Bar{A}_{1,1} &= 0, \label{Eq:PDABarYStarEq1} \\ 
    \Bar{A}_{1, \alpha}^\transp Y^{\star}_{1,1} + Y^{\star}_{\alpha, 1} \Bar{A}_{1,1} + Y^{\star}_{\alpha, \alpha} \Bar{A}_{\alpha, 1} &= 0, \label{Eq:PDABarYStarEq2} \\ 
    (Y^{\star}_{\alpha, 1} \Bar{A}_{1, \alpha} + Y^{\star}_{\alpha, \alpha} \Bar{A}_{\alpha, \alpha}) + (*)^\transp &= 0. \label{Eq:PDABarYStarEq3}
\end{align}
\end{subequations}
The first equation \eqref{Eq:PDABarYStarEq1} is a Sylvester equation, which will have a unique solution if and only if $\eig{\Bar{A}_{1,1}} \cap \eig{-\Bar{A}_{1,1}} = \emptyset$ \cite[Proposition 7.2.3 and 7.2.4]{DSB:11}. Since $\Bar{A}$ is robustly block $D$-stable, $\Bar{A}_{1,1}$ is Hurwitz by Proposition \ref{Prop:RobustSub}, trivially implying $\eig{\Bar{A}_{1,1}} \cap \eig{-\Bar{A}_{1,1}} = \emptyset$. Thus, \eqref{Eq:PDABarYStarEq1} has a unique solution, which by linearity is $Y^{\star}_{1,1} = 0$. Substituting $Y^{\star}_{1,1} = 0$ into \eqref{Eq:PDABarYStarEq2} and rearranging, we find
\begin{equation} \label{Eq:PDABarYStarEq2Simp}
    Y^{\star}_{\alpha, 1} = - Y^{\star}_{\alpha, \alpha} \Bar{A}_{\alpha, 1} \Bar{A}_{1,1}^{-1}
\end{equation}
which upon substitution into \eqref{Eq:PDABarYStarEq3} yields
\begin{equation} \label{Eq:PDRedMatEq0}
\begin{aligned}
    0 &= (\schurcomp{\Bar{A}}{\Bar{A}_{1,1}})^\transp (Y^{\star}_{\alpha, \alpha})^\transp + Y^{\star}_{\alpha, \alpha} (\schurcomp{\Bar{A}}{\Bar{A}_{1,1}}).
\end{aligned}
\end{equation}
By Proposition \ref{Prop:RobustSub}, $\schurcomp{\Bar{A}}{\Bar{A}_{1,1}}$ is robustly block $D$-stable. Thus, \eqref{Eq:PDRedMatEq0} is of the same form as \eqref{Eq:PDEq0}. By the line of argument below \eqref{Eq:PDABarYStar}, we recursively find that $Y^{\star}_{2,2} = \ldots = Y^{\star}_{\Bar{N},\Bar{N}} = 0$, and back substitution into \eqref{Eq:PDABarYStarEq2Simp} implies that $Y^* = 0$. This directly implies $X^{\star} = 0$, which contradicts that $\| X^{\star} \| = 1$. It follows that $\sup_{k \in \mathbb{N}}\|P_{D^k}D^{k}\| < \infty$ for every sequence $\{D^k\}_{k=1}^{\infty} \subset \mathcal{D}_{\rm blk}$, and thus $\sup_{D \in \mathcal{D}_{\rm blk}} \|P_{D}D\| < \infty$, which completes the proof. \end{proof}

%------------------------------------
%------------------------------------
%------------------------------------

\section{Application to Decentralized Integral Control of LTI Systems}
\label{sec:intControl}

Consider the MIMO LTI system with inputs and outputs partitioned into $N \in \natrl$ collections,
\begin{equation} \label{eq:LTIExpanded}
\begin{aligned}
    \Dot{x} &= Ax + \sum_{j = 1}^N\nolimits B^u_j u_j + B^w w, & \quad & x(0) = x_0, \\
    e_i &= C_i x + \sum_{j = 1}^N\nolimits D_{i,j}^u u_j + D_{i}^w w, & \quad & i \in \indx{N},
\end{aligned}
\end{equation}
with state $x \in \real^n$, local control inputs $u_i \in \real^{m_i}$, and local error signals $e_i \in \real^{p_i}$. The system is subject to a constant disturbance $w \in \real^{n_w}$. 
{\tb We assume $A$ is Hurwitz, meaning that either the open-loop system \eqref{eq:LTIExpanded} is inherently stable, or it has been stabilized.} Under this assumption, let
{\tb $G^u_{i,j} \define - C_i A^{-1} B^u_j + D_{i,j}^u$, $G^w_i \define - C_i A^{-1} B^w + D_i^w$}
denote the DC gain matrices of \eqref{eq:LTIExpanded} from $u_j$ to $e_i$ and $w$ to $e_i$, respectively. We denote $G^u, G^w$ the appropriate concatenations of the DC gain matrices from $u$ to $e$ and $w$ to $e$. We consider a set of local low-gain integral controllers
\begin{equation}\label{eq:ControllersExpanded}
    \Dot{\eta}_i = -\varepsilon_i e_i, \qquad u_i = K_i \eta_i, \qquad i \in \indx{N},
\end{equation}
where $\varepsilon_i > 0$ are tuning parameters, which process the errors $e_i$ to produce the control signals $u_i$. We introduce the compact notation $\eta = \col{\eta_1}{\eta_N}$, $\varepsilon = \col{\varepsilon_1}{\varepsilon_N}$, $K = \blkdiag{K_1}{K_N}$, $\mathcal{E} = \blkdiag{\varepsilon_1 I_{p_1}}{\varepsilon_N I_{p_N}}$. Before continuing, we remind the reader of the notation presented {\tb in \eqref{eq:PartMatSub}}. An ordered index set $\sigma \subset \indx{N}$ will be used to identify which integral control loops are closed. Specifically, if $i \in \sigma$, then the $i^\mathrm{th}$ integral control loop has been closed. We appropriately denote $\Bar{\sigma} = \indx{N} \setminus \sigma$ the set of control loops that have not been closed. This leads us to consider the set of closed loop systems $\Sigma^\sigma_\mathrm{cl}(\varepsilon_\sigma)$, defined by
\begin{equation}\label{eq:ClosedLoopSigma}
\begin{aligned}
    \Dot{x} &= Ax + B_\sigma^u u_\sigma + B_{\Bar{\sigma}}^u u_{\Bar{\sigma}} + B^w w, \\
    \dot{\eta}_\sigma &= -\mathcal{E}_\sigma e_\sigma, \\
    e_\sigma &= C_\sigma x + D_{\sigma, \sigma}^u u_\sigma + D_{\sigma, \Bar{\sigma}}^u u_{\Bar{\sigma}} + D_\sigma^w w 
\end{aligned}
\qquad
\begin{aligned}
    &u_{\sigma} = K_\sigma \eta_\sigma, \\
    &u_{\Bar{\sigma}} \in \real^{m_{\bar{\sigma}}}, \\
    &w \in \real^{n_w},
\end{aligned}
\end{equation}
where the closed and open loop control inputs have been collected into $u_{\sigma}$ and $u_{\bar{\sigma}}$, the relevant error signals have been collected in $e_\sigma$, and $m_{\Bar{\sigma}} \define \sum_{i \in \Bar{\sigma}} m_i$. 
{\tb The open loop control inputs $u_{\bar{\sigma}}$ are assumed to be arbitrary constants. Such an assumption enables the treatment of the open loop inputs $u_{\bar{\sigma}}$ as if they were constant disturbances.}
We study the following analysis problem.

\smallskip 

\begin{problem} \label{Prob:1}
Given a set of gains $\{K_i\}_{i=1}^{N}$, find conditions under which there exists an upper gain value $\varepsilon^{\star} > 0$ such that for each closed loop configuration $\sigma \subset \indx{N}$, each unused input $u_{\Bar{\sigma}} \in \real^{m_{\Bar{\sigma}}}$, and each  disturbance $w \in \real^{n_w}$, the closed-loop system $\Sigma^\sigma_\mathrm{cl}$ defined in \eqref{eq:ClosedLoopSigma} possesses a unique equilibrium point which is exponentially stable for all $\varepsilon_{\sigma} \in (0,\varepsilon^{\star})^{|\sigma|}$. 
\end{problem}

\smallskip

If Problem \ref{Prob:1} is feasible, the resulting controllers enjoy several strong properties, namely (i) they are local (ii) they may be implemented (or disconnected) asynchronously, and (iii) they may be tuned independently for desired local performance (as long as $\varepsilon_i \in (0,\varepsilon^*)$). These are all desirable properties for control of large-scale networked systems.

\subsection{Main Stability Result}

{\tb We now present a sufficient condition guaranteeing the feasibility of Problem \ref{Prob:1}}.

\begin{theorem}\label{Theorem:ProbSol} 
Assume that the open-loop system \eqref{eq:LTIExpanded} is exponentially stable. Then Problem \ref{Prob:1} is solvable if $-G^u K$ is robustly block $D$-stable.
\end{theorem}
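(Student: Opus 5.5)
Fix a configuration $\sigma$, constants $u_{\bar\sigma}$ and $w$, and gains $\varepsilon_\sigma$. The plan is a singular-perturbation style Lyapunov argument. The uniform bound of Theorem~\ref{Theorem:RobStabLyap} is what makes the threshold $\varepsilon^\star$ independent of how the $\varepsilon_i$ compare with one another.

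\emph{Reduction to a single matrix.} Closing loop $\sigma$ gives the linear system $\dot x = Ax + B^u_\sigma K_\sigma \eta_\sigma + c_1$ and $\dot\eta_\sigma = -\mathcal{E}_\sigma(C_\sigma x + D^u_{\sigma,\sigma}K_\sigma\eta_\sigma + c_2)$, where $c_1, c_2$ are constant vectors collecting the contributions of $u_{\bar\sigma}$ and $w$. An equilibrium exists and is unique if the closed-loop matrix is invertible. Eliminating $x = -A^{-1}(B^u_\sigma K_\sigma\eta_\sigma + c_1)$ shows this reduces to invertibility of $(G^uK)_{\sigma,\sigma}$. That matrix is nonsingular because, by Proposition~\ref{Prop:RobustSub}, $-(G^uK)_{\sigma,\sigma}$ is robustly block $D$-stable and hence Hurwitz. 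Exponential stability of that equilibrium is then equivalent to Hurwitz stability of the closed-loop matrix, so the affine terms can be dropped.

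\emph{Lyapunov function.} Set $\tilde x = x + A^{-1}B^u_\sigma K_\sigma \eta_\sigma$, the deviation from the quasi-steady state, and write $H \define -(G^uK)_{\sigma,\sigma}$. In these coordinates
\[
\dot\eta_\sigma = \mathcal{E}_\sigma (H\eta_\sigma - C_\sigma \tilde x),\qquad
\dot{\tilde x} = A\tilde x + A^{-1}B^u_\sigma K_\sigma \mathcal{E}_\sigma(H\eta_\sigma - C_\sigma\tilde x).
\]
Let $P_x > 0$ satisfy $A^\transp P_x + P_xA = -I$. Apply Theorem~\ref{Theorem:RobStabLyap} to $H$ with $Q = I$ and $D = \mathcal{E}_\sigma$. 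This gives $P_{\mathcal{E}} > 0$ with $H^\transp\mathcal{E}_\sigma P_{\mathcal{E}} + P_{\mathcal{E}}\mathcal{E}_\sigma H = -I$ and $\|P_{\mathcal{E}}\mathcal{E}_\sigma\|\le M_\sigma$, with $M_\sigma$ independent of $\varepsilon_\sigma$. Take $V = \tilde x^\transp P_x\tilde x + \eta_\sigma^\transp P_{\mathcal{E}}\eta_\sigma$. Its derivative is
\[
\dot V \le -\|\tilde x\|^2 - \|\eta_\sigma\|^2 + 2\|P_{\mathcal{E}}\mathcal{E}_\sigma\|\|C_\sigma\|\|\tilde x\|\|\eta_\sigma\| + 2\|P_x A^{-1}B^u_\sigma K_\sigma\|\,\|\mathcal{E}_\sigma\|\big(\|H\|\|\eta_\sigma\|+\|C_\sigma\|\|\tilde x\|\big)\|\tilde x\|.
\]
All coefficients are uniform except that the cross term $2M_\sigma\|C_\sigma\|\|\tilde x\|\|\eta_\sigma\|$ does not shrink with $\varepsilon$. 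This is the main obstacle.

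\emph{Handling the cross term.} The fix is to weight the fast state, using $V = \gamma\,\tilde x^\transp P_x\tilde x + \eta_\sigma^\transp P_{\mathcal E}\eta_\sigma$ with a fixed $\gamma > 0$. The $\dot V$ bound then becomes
\[
-\gamma\|\tilde x\|^2 - \|\eta_\sigma\|^2 + 2M_\sigma\|C_\sigma\|\|\tilde x\|\|\eta_\sigma\| + \gamma\, c\,\bar\varepsilon\big(\|\eta_\sigma\|\|\tilde x\| + \|\tilde x\|^2\big),
\]
where $\bar\varepsilon = \max_i \varepsilon_i$ and $c$ is a constant independent of $\varepsilon_\sigma$.

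Choose $\gamma$ large, for instance $\gamma > 4M_\sigma^2\|C_\sigma\|^2 + 1$. Then the $\varepsilon$-independent part of the quadratic form is negative definite. Next choose $\varepsilon^\star_\sigma$ small enough that the $\gamma c\bar\varepsilon$ terms preserve negative definiteness, for example by requiring the $2\times 2$ coefficient matrix to stay negative definite.

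Positive definiteness of $V$ comes from $P_{\mathcal E} > 0$, given by Theorem~\ref{Theorem:RobStabLyap}(i). Hence $\dot V \le -\kappa(\|\tilde x\|^2+\|\eta_\sigma\|^2)$, which yields exponential stability. No uniform bound on $V$ itself is needed, since only the Hurwitz property is being concluded.

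\emph{Taking the minimum.} Finally set $\varepsilon^\star = \min_{\sigma}\varepsilon^\star_\sigma$ over the finitely many configurations $\sigma \subset \indx{N}$. This threshold is positive and independent of $u_{\bar\sigma}$ and $w$, which solves Problem~\ref{Prob:1}.
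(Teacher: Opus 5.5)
Your proof is correct and follows essentially the paper's argument. The shared ingredients are: the same quasi-steady-state coordinate (your $\tilde x$ is the paper's $\xi$ with the affine terms dropped); the same block-diagonal composite Lyapunov function built from the Lyapunov matrix of $A$ and the Theorem~\ref{Theorem:RobStabLyap} solution for $-G^u_{\sigma,\sigma}K_\sigma$ with $D=\mathcal{E}_\sigma$; the uniform bound $\|P_s\mathcal{E}_\sigma\|\le M^\sigma$, used exactly to control the cross term $P_s\mathcal{E}_\sigma C_\sigma$; and a final minimum over $\sigma$. The only difference is the relative weighting: the paper scales the slow term by $\varepsilon_{\max}$ (equivalent to your $\gamma = 1/\varepsilon_{\max}$), so that cross term becomes $O(\varepsilon_{\max})$ and a single Schur-complement condition $\lambda_{\min}(Q_f) > c_3^\sigma\varepsilon_{\max}$ suffices, whereas you first fix $\gamma$ large in terms of $M_\sigma$ and then shrink $\bar\varepsilon$; both routes are valid.
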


\smallskip

In the case where we have a single partition ($N = 1$), Theorem \ref{Theorem:ProbSol} reduces to $-G^u K$ Hurwitz, as implicitly given for the case of centralized low-gain integral control in \cite{ED:76b}. Theorem \ref{Theorem:ProbSol} places conditions only on the gain matrices $K = \mathrm{blkdiag}(K_1,\ldots,K_N)$ and the system DC gain $G^u$, and thus relies on very minimal system model information. 

{\tb Analysis problems similar (but not identical) to Problem \ref{Prob:1} have been formalized in the literature (for example \cite{EA:86, PJC-MM:94}), with the corresponding stability analyses relying on triangularization and frequency domain techniques, respectively. In contrast, Theorem \ref{Theorem:RobStabLyap} enables us to perform the first {\tb \emph{Lyapunov-based}} stability analysis, similar in style to Lyapunov stability proofs using singular perturbation methods (see, e.g., \cite[Chapter 7.2]{PK-HKK-JOR:99}) in the centralized case \cite{JWSP:21}. The robust $D$-stability condition in Theorem \ref{Theorem:ProbSol} is of equal strength to conditions one would have obtained using the techniques given in \cite{EA:86, PJC-MM:94}, but with the added potential for extension to nonlinear systems that a Lyapunov framework provides.}

\smallskip

\begin{proof}
Consider any closed loop configuration $\sigma \subset \indx{N}$.
One may easily verify that
\begin{equation} \label{eq:EqbmXEta}
\begin{aligned}
    \bar{x}^\sigma &= -A^{-1}B^u_\sigma K_\sigma \bar{\eta}^\sigma_\sigma - A^{-1}B^u_{\Bar{\sigma}} u_{\Bar{\sigma}} - A^{-1}B^w w, \\
    \bar{\eta}_\sigma^\sigma &= (-G^u_{\sigma, \sigma} K_\sigma)^{-1} G^u_{\sigma, \Bar{\sigma}} u_{\bar{\sigma}} + (-G^u_{\sigma, \sigma} K_\sigma)^{-1} G^w_\sigma w    
\end{aligned}
\end{equation}
is the (necessarily unique) equilibrium point of $\Sigma^\sigma_\mathrm{cl}$ described in \eqref{eq:ClosedLoopSigma}. We note that $A$ is Hurwitz by assumption. Robust block $D$-stability of $-G^u K$ implies robust block $D$-stability of $-G^u_{\sigma, \sigma} K_\sigma$, where we have leveraged Proposition \ref{Prop:RobustSub}; this implies $-G^u_{\sigma, \sigma} K_\sigma$ must necessarily be Hurwitz. As $A$ and $-G^u_{\sigma, \sigma} K_\sigma$ are Hurwitz, their inverses in \eqref{eq:EqbmXEta} are well defined.
To show stability, consider the change of state variable
\begin{equation} \label{eq:ClosedLoopSimilarityTransform}
    \xi \define x - \Pi^u_\sigma K_\sigma \eta_\sigma - \Pi^u_{\Bar{\sigma}} u_{\Bar{\sigma}} - \Pi^w w,
\end{equation}
where we have defined
\begin{equation}
\begin{aligned}
    \Pi^u \define -A^{-1}B^u,
\end{aligned}
\qquad 
\begin{aligned}
    \Pi^w \define -A^{-1}B^w.
\end{aligned}
\end{equation}
{\tb Differentiating \eqref{eq:ClosedLoopSimilarityTransform} and using  the dynamics \eqref{eq:ClosedLoopSigma}, tedious but routine calculations yield the new closed-loop system}
\begin{equation}\label{eq:ClosedLoopSigmaCOV}
\begin{aligned}
    \begin{bmatrix} \Dot{\xi} \\ \Dot{\eta}_\sigma \end{bmatrix} = \mathcal{A}^{\sigma} \begin{bmatrix} \xi \\ \eta_\sigma \end{bmatrix} + \mathcal{B}^{\sigma} \begin{bmatrix} u_{\Bar{\sigma}} \\ w \end{bmatrix},
\end{aligned}
\end{equation}
where we have defined
\begin{equation}
\begin{aligned}
    \mathcal{A}^\sigma &\define \begin{bmatrix} A + \Pi^u_\sigma K_\sigma \mathcal{E}_\sigma C_\sigma & \Pi^u_\sigma K_\sigma \mathcal{E}_\sigma G^u_{\sigma, \sigma} K_\sigma \\ -\mathcal{E}_\sigma C_\sigma & -\mathcal{E}_\sigma G^u_{\sigma, \sigma} K_\sigma \end{bmatrix} \\
    \mathcal{B}^\sigma &\define \begin{bmatrix} \Pi^u_\sigma K_\sigma \mathcal{E}_\sigma G^u_{\sigma, \Bar{\sigma}} & \Pi^u_\sigma K_\sigma \mathcal{E}_\sigma G^w_\sigma \\ -\mathcal{E}_\sigma G^u_{\sigma, \Bar{\sigma}} & -\mathcal{E}_\sigma G^w_\sigma \end{bmatrix}.
\end{aligned}
\end{equation}
Fix {\tb $Q_f = Q_f^\transp > 0$ and $Q_s = Q_s^\transp > 0$}. Let ${\tb P_f = P_f^\transp > 0}$ be the unique solution to
\begin{equation}
    A^\transp P_f + P_f A = -Q_f.
\end{equation}
 By Theorem \ref{Theorem:RobStabLyap},  there exists $M^\sigma > 0$ such that for any $\mathcal{E}_\sigma \in \mathcal{D}_{\rm blk}$, there exists ${\tb P_s = P_s^\transp > 0}$ such that $\|P_s \mathcal{E}_\sigma \|_2 \leq M^\sigma$, and 
\begin{equation}
    (-G^u_{\sigma, \sigma} K_\sigma)^\transp \mathcal{E}_\sigma P_s + P_s \mathcal{E}_\sigma (-G^u_{\sigma, \sigma} K_\sigma) = -Q_s.
\end{equation}
Consider the composite Lyapunov candidate 
\begin{equation}\label{Eq:LTILyap0}
    V_\sigma(\xi,\eta_{\sigma}) = (*)^\transp \mathcal{P} \begin{bmatrix} \xi \\ \eta_\sigma - \bar{\eta}^\sigma_\sigma \end{bmatrix}, \,\, \mathcal{P} = \begin{bmatrix} P_f & 0 \\ 0 & \varepsilon_{\rm max} P_s \end{bmatrix},
\end{equation}
where $\varepsilon_{\rm max} \define \| \mathcal{E}_\sigma \|_{2} $. Routine calculations show
\begin{equation}\label{Eq:LTILyap}
\begin{aligned}
\mathcal{A}^\transp \mathcal{P} + \mathcal{P} \mathcal{A} &= -\begin{bmatrix} Q_f - X^\sigma & -{Y^\sigma}^\transp \\
    -Y^\sigma & \varepsilon_{\rm max} Q_s \end{bmatrix} \define -\mathcal{Q},
\end{aligned}
\end{equation}
where
\begin{equation} \label{eq:XYDef}
\begin{aligned}
    X^\sigma &\triangleq (\Pi^u_\sigma K_\sigma \mathcal{E}_\sigma C_\sigma)^\transp P_f + P_f (\Pi^u_\sigma K_\sigma \mathcal{E}_\sigma C_\sigma), \\
    Y^\sigma &\triangleq (\Pi^u_\sigma K_\sigma \mathcal{E}_\sigma G^u_{\sigma, \sigma} K_\sigma)^\transp P_f + \varepsilon_{\rm max} P_s \mathcal{E}_\sigma (-C_\sigma).
\end{aligned}
\end{equation}
Note that $X^\sigma, Y^\sigma$ satisfy the bounds
\begin{equation*}
\begin{aligned}
    \|X^\sigma\| \leq c^\sigma_1\varepsilon_{\rm max},
\end{aligned}
\qquad 
\begin{aligned}
    \|Y^\sigma\| \leq c^\sigma_2\varepsilon_{\rm max}
\end{aligned}
\end{equation*}
for constants $c^\sigma_1, c^\sigma_2 > 0$ which are independent of $\mathcal{E}_\sigma$; in obtaining the bound for $Y^\sigma$, we have invoked the bound $\| P_s \mathcal{E}_\sigma \|_2 \leq M^\sigma $. Since ${\tb \varepsilon_{\rm max}Q_s > 0}$, it follows that ${\tb \mathcal{Q} > 0}$ if and only if \cite[Proposition 8.2.4]{DSB:11} {\tb
\begin{equation} \label{eq:FinalSchurP1LTI}
    Q_f - X^\sigma - {Y^\sigma}^T(\varepsilon_{\rm max}Q_s)^{-1}Y^\sigma > 0,
\end{equation} }
which holds if $\lambda_{\rm min}(Q_f) - c^\sigma_3\varepsilon_{\rm max} > 0$, where $c^\sigma_3 \define (c^\sigma_1+(c^\sigma_2)^2/\lambda_{\rm min}(Q_s))$. More simply, \eqref{eq:FinalSchurP1LTI} holds if $\varepsilon_{\rm max} \in (0,\varepsilon^{*, \sigma})$ where $\varepsilon^{*, \sigma} \define \lambda_{\rm min}(Q_f)/c^\sigma_3$. We define $\varepsilon^* \define \min_{\sigma \subset \indx{N}} \varepsilon^{*, \sigma}$ to be our overall upper tuning gain across the set of systems \eqref{eq:ClosedLoopSigma}. Thus, for any $\sigma \in \indx{N}$, if $\varepsilon_\sigma \in (0,\varepsilon^*)^{|\sigma|}$, the equilibrium point $(\bar{x}^\sigma, \bar{\eta}^\sigma_\sigma)$ will be exponentially stable.
\end{proof}

{\tb Analyzing \eqref{eq:XYDef} and the deductions below \eqref{eq:FinalSchurP1LTI}, one finds $\varepsilon^{*}$ shrinks as $M^\sigma$ (given by Theorem \ref{Theorem:RobStabLyap}) grows. As the dependence of $M^\sigma$ on the matrix $-G^uK$ is not well understood, the bound $\varepsilon^*$ remains qualitative. Characterizing the relationship between $M^\sigma$, $\mu$, and structural properties of $-G^uK$ is a topic for future study.}

\subsection{Numerical Example}

Consider an exponentially stable two partition LTI system of the form \eqref{eq:LTIExpanded}, with $x \in \real^3$, $u_1, e_1 \in \real^2$, $u_2, e_2 \in \real^1$, 
{\tb
\begin{equation*}
    A =
    \left[
    \begin{array}{c c c}
    -10 & 0 & 10 \\ -100 & -10 & 100 \\ 100 & 10 & -110
    \end{array}
    \right],
\end{equation*} 
}$B^u = C = I \in \real^{3 \times 3}$, and $D^u = 0 \in \real^{3 \times 3}$. We additionally interpret $w \in \real^3$ as a constant reference we wish to track, leading to the selections $D^w = -I \in \real^{3 \times 3}$ and $B^w = 0 \in \real^{3 \times 3}$. We implement a decentralized low-gain integral controller of the form \eqref{eq:ControllersExpanded}. Selecting $K = I \in \real^{3 \times 3}$, we calculate

\begin{equation*}
    -G^u K =
    \left[
    \begin{array}{c c|c}
    -0.1 & -0.1 & -0.1 \\ 1 & -0.1 & 0 \\ \hline 0 & -0.1 & -0.1
    \end{array}
    \right].
\end{equation*}

One can show $-G^u K$ is robustly $D$-stable (and thus trivially robustly block $D$-stable) \cite[Condition 12]{WSK:02}. Alternatively, one may leverage Theorem \ref{Theorem:RobStabLyap} to achieve the same result (which has been omitted for space). Notably, one finds $-G^u K$ is \emph{not} block diagonally stable using standard convex optimization techniques (e.g., \cite{SB-QY:89}). Applying Theorem \ref{Theorem:ProbSol}, the closed loop system is exponentially stable for all  sufficiently small tunings of $(\varepsilon_1,\varepsilon_2)$. We simulate the system with three different sets of tuning parameters determined experimentally, with initial conditions $x(0) = \eta(0) = 0 \in \real^3$, and reference $w = \begin{bmatrix} 10 & 15 & 5 \end{bmatrix}^\transp$. The first and second control loops are closed at $t_{\mathrm{cl},1} = 0$ and $t_{\mathrm{cl},2} = 75$ respectively. Simulation results are shown in Figure \ref{fig:errorSignal}, which illustrate how the robust block $D$-stability condition allows for arbitrary loop closures and disconnections.

\begin{figure*}
    \centering
    \includegraphics[width=\textwidth]{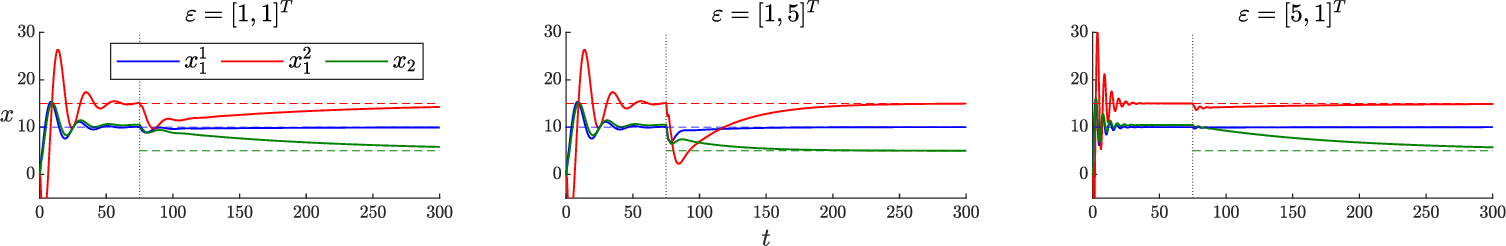}
    \caption{State reference tracking; vertical line indicates closure of the second control loop.}
    \label{fig:errorSignal}
\end{figure*}

\section{Conclusion}
\label{sec:conclusion}

We have developed a Lyapunov characterization for the robust block $D$-stability property, and demonstrated how it can be leveraged in a Lyapunov-based proof to establish stability conditions for systems under decentralized low-gain integral control. Future work will focus on extending the robust $D$-stability concept to nonlinear systems, with intended applications in nonlinear decentralized integral control and  decentralized feedback-based optimization.

\appendices

\section{Supporting Proofs} 
\label{Appendix:SupportProofs}

\begin{pfof}{Lemma \ref{Lemma:BlockDSeq}} 
    
    Recursively define the index sets $\Lambda_0 \define \emptyset $,
    \[
    \begin{aligned}
    \Lambda_\ell &\define \left\{ i \in \indx{N} \setminus \union_{r=0}^{\ell-1} \Lambda_r \, \Big| \limsup_{k \rightarrow \infty} \frac{d_j^k}{d_i^k} < \infty\right.\\
    &\qquad \qquad \qquad \qquad  \left.\bigg.\,\,\text{for all}\,\, j \in \indx{N} \setminus \union_{r=0}^{\ell-1} \Lambda_r \right\}
    \end{aligned}
    \]
    for $\ell \geq 1$. We define $\bar{N} \leq N$ as the smallest integer $\ell$ for which $\union_{r=1}^{\ell}\Lambda_{r} = \indx{N}$. By construction, $\Lambda_i \intersection \Lambda_j = \emptyset$ for all $i,j \in \indx{\bar{N}}$. For each $i \in \indx{\bar{N}}$, we now define the sequence elements $\tau_i^k \define d_{i^{\star}}^k$, where $i^{\star} \in \Lambda_{i}$ is a fixed arbitrary selection, and the sequence elements {\tb $\tilde{D}_i^k \define \mathrm{blkdiag} ( \{ d_j^k I / \tau_i^k \}_{j\in\Lambda_i} ) \in \mathcal{D}_{\rm blk}$.}
    Note that, by construction, $\lim_{k \rightarrow \infty} \tau_i^k / \tau_j^k = 0$ if and only if $i > j$. Additionally, for each $\ell \in \indx{\Bar{N}}$, we define the constant {\tb $c_{\ell} \define \sup_{k \in \natrl, i,j \in \Lambda_\ell} d_j^k / d_i^k < \infty$}
    and associated compact set {\tb $\mathcal{C}_\ell \define \{ D \in \mathcal{D}_\mathrm{blk} \mid c_\ell^{-1} I \leq D \leq c_\ell I \} \subset \mathcal{D}_\mathrm{blk}$ }. By construction, $\{\tilde{D}^k\}_{k=1}^{\infty} \subset \mathcal{C} \define \mathcal{C}_1 \times \ldots \times \mathcal{C}_{\Bar{N}}$. Existence of a unique block permutation matrix relating $D^k$ and the product $\Tilde{D}^k T^k$ is clear, which establishes the claim.
\end{pfof}

\section*{Acknowledgment}

The authors thank N. Monshizadeh for helpful literature pointers and discussions. The authors have used GPT-5.4 to assist with structural editing and grammar.

\bibliographystyle{IEEEtran}
\bibliography{brevalias, references}

\end{document}